\documentclass[sigconf,nonacm]{acmart}

\renewcommand\footnotetextcopyrightpermission[1]{}
\usepackage{booktabs}
\usepackage{amsmath}
\usepackage{graphicx}

\newtheorem{assumption}{Assumption}

\newcommand{\lneg}{\lambda_{\mathrm{neg}}}
\newcommand{\Mk}{\mathcal{M}_\kappa}

\begin{document}

\title{Between-User Collapse Under Popularity-Biased Feedback:\\
A Centered-Covariance Theorem and Computable Phase Boundary}

\author{Sahil Medepalli}
\affiliation{%
  \institution{Independent Researcher}
  \city{Flower Mound}\state{Texas}\country{USA}}
\email{sahilmedepalli@gmail.com}

\begin{abstract}
We study how popularity-biased BPR training reshapes the \emph{between-user} geometry of collaborative-filtering embeddings. We work with the mean-centered user covariance $C=\tfrac1n U^\top H U$, the object that measures how distinguishable users are from one another, as opposed to the uncentered second moment used in prior work. We prove that under popularity-biased feedback with stationary items, $C$ converges to a steady state proportional to the item-noise covariance $Q$. Thus between-user spread collapses toward a noise floor. We derive a closed-form, computable phase boundary in the training hyperparameters $(\alpha,\lneg,\gamma,d)$ separating contraction from expansion, and validate both directional predictions on MovieLens-25M. We then examine the limits of the effect. At deployment-scale regularization the predicted contraction is real and policy-driven but small, and it is not reflected in any recommendation-level metric we measured. The $\alpha$-driven anisotropic-collapse mechanism operates only at regularization strengths that degrade the recommender. A deployment-time restoration intervention derived from the theory does not improve recommendation quality. The boundary is computable from a trained model's embeddings, item interaction counts, and training hyperparameters, so a practitioner can check whether a deployed system sits in the strong-collapse regime without simulating the feedback loop. In our experiments the boundary places deployable settings far from that regime.
\end{abstract}

\keywords{feedback loops, popularity bias, collaborative filtering, embedding collapse, user homogenization, negative results}

\maketitle

\section{Introduction}

Recommendation feedback loops homogenize users. Training on algorithmically confounded data homogenizes behavior without increasing utility \cite{chaney2018confounding}, and iterated retraining makes popularity bias stronger~\cite{mansoury2020feedback}. These are \emph{behavioral} accounts of a long-term consequence. What is less understood is how popularity-biased training reshapes the geometry of the \emph{user embedding space} over time, and we lack a good test of whether that geometric effect has measurable recommendation-level consequences.

There is a growing line of work that studies embedding collapse in collaborative filtering \cite{peng2025directspec,chen2024ncl,zhang2023dimensionalcollapse,loveland2025matrixrank}, echoing dimensional collapse in contrastive self-supervised learning \cite{jing2022dimensionalcollapse}. All of it operates on the \emph{uncentered} embedding matrix or its second moment $\Sigma=\tfrac1n U^\top U$, which conflates two distinct phenomena. Under popularity bias, user embeddings drift together toward a shared ``popularity direction'', which is a mean-drift effect recently characterized analytically \cite{liu2026rethinking}. A decline in the stable rank of the uncentered matrix \cite{loveland2025matrixrank} is consistent with mean drift alone, meaning users can move \emph{together} without becoming more similar \emph{to one another}. User homogenization means recommendations becoming identical \emph{across users}, not just worse on average, and is a property of the \emph{centered} covariance $C=\Sigma-\bar{\mathbf u}\bar{\mathbf u}^\top$, which no prior work analyzes under feedback.
The phase boundary we derive only needs a trained model's embeddings, item interaction counts, and training hyperparameters, so a practitioner can determine whether a system sits in the strong-collapse regime before worrying about geometric homogenization.

\textbf{Contributions.} (1)~A theorem on the centered (between-user) covariance: under popularity-biased BPR feedback with stationary items, $C$ converges to a steady state \emph{exactly proportional} to the item-noise covariance $Q$. Centering removes the popularity-drift component, and between-user spread collapses toward a noise floor whose anisotropy is that of $Q$. (2)~A closed-form phase boundary in $(\alpha,\lneg,\gamma,d)$, computable from a trained embedding matrix, item counts, and hyperparameters, separating contraction from expansion regimes. We validate both directions on MovieLens-25M. (3)~An empirical scoping showing that at deployable regularization the contraction is real and policy-driven but \emph{not reflected in recommendation metrics} in our setup, that the $\alpha$-driven anisotropy mechanism is inert at deployable settings, and a theory-derived restoration intervention that slightly \emph{hurts} accuracy.

\section{Related Work}

\textbf{Embedding collapse in CF (uncentered, training-time).} DirectSpec \cite{peng2025directspec} balances the embedding spectrum during training (alignment as low-pass, negative sampling as high-pass filter). nCL \cite{chen2024ncl} prevents dimensional collapse via an alignment-plus-compactness (coding-rate) objective. Zhang et al.\ \cite{zhang2023dimensionalcollapse} show graph convolution shrinks the singular space and counteract it with a LogDet regularizer. Loveland et al.\ \cite{loveland2025matrixrank} show alignment lowers the stable rank of the \emph{raw} user/item matrices. That is an uncentered object whose decline is consistent with pure mean drift. Their companion result shows weight decay encodes popularity into embedding \emph{magnitudes} \cite{loveland2025weightdecay}, which is a norm-level effect. None of these analyze the mean-centered between-user covariance, and none derive a contraction/expansion phase boundary in the training hyperparameters.

Liu et al.\ \cite{liu2026rethinking} prove BPR organizes item embeddings along a dominant popularity direction and propose a directional correction (DDC). Their analysis focuses on the mean-drift component, while our result characterizes what happens to between-user spread \emph{after} the mean drifts. The drift term is eliminated exactly in our analysis.

Simulation studies \cite{chaney2018confounding,mansoury2020feedback} establish the behavioral consequence. We provide the representational mechanism and test whether the geometric effect has behavioral consequences at deployable settings.

The distinction that does the work is centering. Every prior analysis of collapse in collaborative filtering studies an uncentered object, and an uncentered spectrum falls when users translate together just as it falls when they converge on each other. The two are different claims about a recommender, and only the second is homogenization.

\section{Setup and Assumptions}

Let $n$ be the number of users, $d$ the embedding dimension, $U\in\mathbb{R}^{n\times d}$ the user-embedding matrix, and $H=I_n-\tfrac1n\mathbf 1_n\mathbf 1_n^\top$ the centering matrix ($H=H^\top$, $H^2=H$, $H\mathbf 1_n=\mathbf 0$). The centered user-embedding covariance is
\[
C=\tfrac1n U^\top H U=\Sigma-\bar{\mathbf u}\bar{\mathbf u}^\top,
\qquad \Sigma=\tfrac1n U^\top U,\quad
\bar{\mathbf u}=\tfrac1n U^\top\mathbf 1_n,
\]
and the volume measure is $V_C=\log\det(C+\varepsilon I)$, which we track over rounds $t=0,1,2,\dots$ $C$ measures \emph{between-user spread}, and the uncentered $\Sigma=C+\bar{\mathbf u}\bar{\mathbf u}^\top$ conflates it with the squared mean norm. A decrease in $V_C$ means that users are converging toward one another after mean subtraction, which is what user homogenization is.

\begin{assumption}[Popularity-Biased Interaction Distribution for Mainstream Users]
\label{ass:popularity}
Let $p_u(i) = c_{ui} / \sum_k c_{uk}$ be user $u$'s empirical interaction distribution, where $c_{ui}$ is the number of times user $u$ has interacted with item $i$, and let $c_i = \sum_u c_{ui}$. We define the popularity-weighted global distribution, with exponent $\alpha \geq 0$, as:
\[
p_{\mathrm{global}}(i;\alpha) =
\frac{c_i^\alpha}{\sum_{k=1}^m c_k^\alpha},
\]
uniform at $\alpha = 0$ and concentrating on globally popular items as $\alpha$ grows. Define the \emph{mainstream user subpopulation} at threshold $\kappa > 0$:
\[
\Mk = \left\{u :
D_{KL}\!\left(p_u \,\|\,
p_{\mathrm{global}}(\cdot\,;1)\right) \leq
\kappa\right\}.
\]
We restrict attention to users $u \in \Mk$ and approximate their positive sampling distribution as the global popularity distribution, $p(i \mid u) \approx p_{\mathrm{global}}(i;\alpha)$. The approximation error in the expected positive item embedding is bounded by Pinsker's inequality:
\[
\left\|\mathbb{E}_{i \sim p_u}[\mathbf{v}_i]
- \boldsymbol{\mu}_+(\alpha)\right\|
\leq \sqrt{2\kappa} \cdot \max_i \|\mathbf{v}_i\|,
\]
where $\boldsymbol{\mu}_+(\alpha) =
\sum_i p_{\mathrm{global}}(i;\alpha)\mathbf{v}_i$. The bound is small relative to $\|\boldsymbol{\Phi}(\alpha)\|$ whenever the popularity direction is strong.
\end{assumption}

The threshold $\kappa$ trades coverage against approximation error, and the Pinsker bound is what makes the trade explicit. Smaller $\kappa$ tightens the bound and shrinks $\Mk$. We use $\kappa=3.0$. On MovieLens-25M, among the $1000$ most active users the mean $D_{KL}$ is $1.86$ and $95.4\%$ satisfy $D_{KL}\le 3.0$, while among $1000$ users sampled at random the mean is $3.28$ and $41.7\%$ satisfy it. The theorem therefore applies to most active users in the dataset and to a minority of users sampled at random. This is the honest scope. Users whose taste never resembled the popular distribution are not the users a popularity-driven collapse argument is about.

In experiments we manipulate $\alpha$ by popularity-weighted positive resampling, drawing positives from $p_u^{\alpha}(i)\propto c_{ui}^{\alpha}$ rather than uniformly from the user's history. For users in $\Mk$ this is a clean first-order manipulation of the effective global exponent, since $p_u^\alpha \approx p_{\mathrm{global}}^\alpha$ by the same argument that justifies the assumption. We verify the realized exponent after each epoch by an OLS fit of $\log \hat p(i)$ on $\log c_i$, and we accept an arm only when the fitted exponent is within $0.1$ of the target with $R^2>0.9$. Conditions $\alpha\in\{1.5,2.0\}$ exceed the empirical $\hat\alpha=1.11$ on this dataset, and we treat them as stress tests rather than as estimates of realistic system behavior.

\begin{assumption}[BPR Gradient Update]
\label{ass:bpr}
User embeddings are updated via gradient descent on the Bayesian Personalized Ranking loss \cite{rendle2009bpr} with learning rate $\eta > 0$ and $\ell_2$ regularization strength $\gamma > 0$:
\[
\mathbf{u}^{t+1} = (1 - \eta\gamma)\mathbf{u}^t
+ \eta \delta_{ij}(\mathbf{v}_i - \mathbf{v}_j),
\]
where $\delta_{ij} = 1 - \sigma(\mathbf{u}^\top
\mathbf{v}_i - \mathbf{u}^\top \mathbf{v}_j)
\in (0,1)$ and $\sigma$ denotes the sigmoid function. This assumption describes SGD with constant learning rate. Production systems often use adaptive optimizers, and the qualitative mechanism holds under them, though the quantitative phase boundary may shift. We validate the boundary under SGD in simulation
(Appendix~\ref{app:sim}) and the steady-state structure under Adam on
real data (Section~\ref{sec:empirical}).
\end{assumption}

\begin{assumption}[Independent Negative Sampling]
\label{ass:negsampling}
Negative items are sampled uniformly at random from the item catalog with $\lneg$ negative samples per positive sample. Training pairs $(i_u, j_u)$ are sampled independently across users within each step.
\end{assumption}

\begin{assumption}[Stationary Item Embeddings]
\label{ass:stationary}
Item embeddings $\{\mathbf v_i\}_{i=1}^m$ are fixed during the analysis period, which isolates the user-embedding dynamics. We validate the theorem directly in this regime by training with the item embeddings frozen (Section~\ref{sec:empirical}), and we do not appeal to late-training convergence. When item embeddings co-evolve in the full feedback loop, item dynamics introduce an additional expansive force that we characterize empirically as out-of-scope behavior. The stationary-item theorem describes the user-side mechanism in isolation.
\end{assumption}

\begin{assumption}[Popularity Non-Degeneracy]
\label{ass:nondegen}
The item embedding distribution is popularity-correlated and non-degenerate. Formally, letting
\[
\boldsymbol{\Phi}(\alpha) = \sum_{i=1}^m
\left(\frac{c_i^\alpha}{\sum_k c_k^\alpha}
- \frac{1}{m}\right)\mathbf{v}_i,
\]
we require $\boldsymbol{\Phi}(\alpha) \neq \mathbf{0}$ for $\alpha > 0$, so that the mean-drift direction is nontrivial, and $Q(\alpha,\lneg) \succ 0$, so that the item-noise covariance appearing in Theorem~\ref{thm:contraction} is positive definite. We verify this empirically, where $\log\det Q$ is finite.
\end{assumption}

Additionally, we require $0 < \eta\gamma < 2$ to ensure convergence of the mean user embedding recurrence. Throughout we assume $C$ is positive definite, which holds generically when $n \gg d$. MovieLens-25M satisfies this with $n = 162{,}541$ and $d = 64$. The $\varepsilon I$ ridge provides numerical stability and is not required theoretically.

\emph{The $\bar\delta$ approximation.} Throughout the analysis we replace $\delta_{ij}(\mathbf{u})$ with its expectation $\bar{\delta}$, treating it as constant across users and samples \cite{bottou2018optimization}. Since $\delta_{ij} \in (0,1)$, Popoviciu's inequality gives $\mathrm{Var}(\delta_{ij}) \leq \tfrac{1}{4}$, and by Cauchy--Schwarz, $|\mathrm{Cov}(\delta_{ij},\mathbf{v}_i)| \leq \tfrac{1}{2}\sqrt{\mathrm{Var}(\mathbf{v}_i)}$. The direction of the approximation error depends on the regime. When
regularization pins scores near zero, $\bar\delta$ stays near $0.5$
and the approximation is nearly exact. When the model fits,
$\bar\delta$ decays and its state dependence adds a contractive force
the approximation misses (Appendix~\ref{app:sim}). As training progresses and embeddings converge toward $\boldsymbol{\Phi}(\alpha)$, $\delta_{ij}$ becomes smaller for popular positives, introducing a natural self-limiting mechanism, which is exactly the decay we observe empirically ($\bar\delta = 0.371$ at warmup, plateauing to $0.186$).

\section{Main Result}

\begin{theorem}[Collapse of Between-User Spread under Popularity-Biased Feedback]
\label{thm:contraction}
Under Assumptions \ref{ass:popularity}--\ref{ass:nondegen}, for users in $\Mk$, with $0<\eta\gamma<2$ and the $\bar\delta$ approximation, the centered user covariance $C^t=\tfrac1n U^{t\top}HU^t$ converges to the steady state determined by $\mathbb E[\Delta C]=0$:
\[
C^{*}=\frac{\eta\bar\delta^{2}}{2\gamma}\,Q(\alpha,\lneg)
=\frac{\bar\delta^{2}\tau}{\gamma^{2}}\,Q,\qquad \tau=\tfrac{\eta\gamma}{2}.
\]
Thus $C^{*}$ is proportional to the item-noise covariance $Q$, meaning it shares $Q$'s eigenvectors with eigenvalues $\lambda_k^{C,*}=(\bar\delta^{2}\tau/\gamma^{2})\mu_k$, where $\mu_k$ is the $k$-th eigenvalue of
\[
Q(\alpha,\lneg)=
S_+(\alpha)-\boldsymbol\mu_+(\alpha)\boldsymbol\mu_+(\alpha)^\top
+\tfrac{1}{\lneg}\bigl(S_--\boldsymbol\mu_-\boldsymbol\mu_-^\top\bigr),
\]
$S_+(\alpha)=\sum_i w_i(\alpha)\mathbf v_i\mathbf v_i^\top$, $w_i(\alpha)=c_i^\alpha/\sum_k c_k^\alpha$, $S_-=\tfrac1m\sum_j\mathbf v_j\mathbf v_j^\top$. Between-user spread collapses toward the noise floor $\mu_k$ scaled by $\bar\delta^2\tau/\gamma^2$, with no dependence on the popularity direction $\boldsymbol\Phi(\alpha)$, which cancels exactly under $H\mathbf 1_n=\mathbf 0$. The steady-state anisotropy is inherited from $Q$. Finally $\lambda_k^{C,*}>0$ for all $k$ under Assumption~\ref{ass:nondegen} and $\lneg<\infty$, so $C$ collapses toward, but not to, zero.
\end{theorem}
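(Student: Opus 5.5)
The plan is to write the $\bar\delta$-approximated BPR step of Assumption~\ref{ass:bpr} for all users at once, center it with $H$, and reduce the evolution of $\mathbb E[C^t]$ to a scalar-coefficient linear matrix recurrence whose fixed point is read off directly. Stack the per-user updates as
\[
U^{t+1}=(1-\eta\gamma)U^t+\eta\bar\delta\,Z^t ,
\]
where the $u$-th row of $Z^t$ is $\mathbf z_u^{t\top}$, with $\mathbf z_u^t=\mathbf v_{i_u}-\bar{\mathbf v}_{j_u}$. Here $i_u\sim p_{\mathrm{global}}(\cdot;\alpha)$ for $u\in\Mk$ by Assumption~\ref{ass:popularity}, and $\bar{\mathbf v}_{j_u}$ is the average of the $\lneg$ uniform negatives of Assumption~\ref{ass:negsampling}, which are all applied against the same positive. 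Under the $\bar\delta$ approximation, $Z^t$ does not depend on $U^t$, and its rows are i.i.d.\ across users and rounds.

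First I compute the first two moments of $\mathbf z$. Its mean is $\boldsymbol\mu_+(\alpha)-\boldsymbol\mu_-=\boldsymbol\Phi(\alpha)$, which follows directly from the definition in Assumption~\ref{ass:nondegen}. Since the positive and the negatives are independent and the $\lneg$ negatives are i.i.d., its covariance is
\[
\bigl(S_+-\boldsymbol\mu_+\boldsymbol\mu_+^\top\bigr)+\tfrac{1}{\lneg}\bigl(S_--\boldsymbol\mu_-\boldsymbol\mu_-^\top\bigr)=Q(\alpha,\lneg).
\]
Write $Z^t=\mathbf 1_n\boldsymbol\Phi^\top+\tilde Z^t$ with $\tilde Z^t$ mean zero. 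Since $H\mathbf 1_n=\mathbf 0$, we get $HZ^t=H\tilde Z^t$, and the popularity drift disappears exactly from the centered dynamics; this is the claimed independence from $\boldsymbol\Phi$. The drift survives only in the mean recurrence $\bar{\mathbf u}^{t+1}=(1-\eta\gamma)\bar{\mathbf u}^t+\eta\bar\delta\boldsymbol\Phi+\text{noise}$, which converges because $|1-\eta\gamma|<1$.

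Next I expand
\[
C^{t+1}=\tfrac1n U^{t+1\top}HU^{t+1}=(1-\eta\gamma)^2C^t+\tfrac{(1-\eta\gamma)\eta\bar\delta}{n}\bigl(U^{t\top}H\tilde Z^t+\tilde Z^{t\top}HU^t\bigr)+\tfrac{\eta^2\bar\delta^2}{n}\tilde Z^{t\top}H\tilde Z^t .
\]
Conditioning on $U^t$, the cross terms have zero expectation because $\tilde Z^t$ is independent of $U^t$ and mean zero. For the last term I use independence across rows to get $\mathbb E[\tilde Z^\top H\tilde Z]=\mathrm{tr}(H)\,Q=(n-1)Q$. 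This gives the linear recurrence
\[
\mathbb E C^{t+1}=(1-\eta\gamma)^2\,\mathbb E C^t+\eta^2\bar\delta^2\tfrac{n-1}{n}Q .
\]
Since $0<\eta\gamma<2$, the factor $(1-\eta\gamma)^2$ lies in $[0,1)$, so the recurrence contracts geometrically to its unique fixed point, the solution of $\mathbb E[\Delta C]=0$:
\[
C^*=\frac{\eta\bar\delta^2}{\gamma(2-\eta\gamma)}\cdot\frac{n-1}{n}\,Q .
\]
To recover the stated constant $\eta\bar\delta^2/(2\gamma)=\bar\delta^2\tau/\gamma^2$, I will drop the $O(\eta\gamma)$ and $O(1/n)$ corrections, i.e.\ take $2-\eta\gamma\approx2$ (small step, $\tau\ll1$) and $n\gg d$. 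I will state this explicitly as part of the approximation. Proportionality to $Q$ holds exactly in either form, so the eigenvectors of $C^*$ are those of $Q$ and $\lambda_k^{C,*}\propto\mu_k$. Finally, $\mu_k>0$ follows from $Q\succ0$ (Assumption~\ref{ass:nondegen}). The $\tfrac1{\lneg}$ term keeps $Q$ from losing the negative-sample noise contribution for finite $\lneg$, so $C^*\neq0$.

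The main obstacle is justifying that the cross terms vanish and that $Z^t$ is independent of $U^t$. Both rely entirely on freezing $\delta_{ij}$ at $\bar\delta$, which removes the dependence of the step on $\mathbf u$. Item stationarity (Assumption~\ref{ass:stationary}) is also needed, so that $Q$ is time-invariant. I would first make the state-independence assumption explicit and treat convergence as convergence of $\mathbb E C^t$. A concentration remark ($n\gg d$) then transfers the result to $C^t$ itself, and I would note that the residual state dependence of $\delta$ is the error term discussed after Assumption~\ref{ass:nondegen}.
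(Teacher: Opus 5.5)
Your proposal is correct and follows the paper's proof essentially step for step. The shared steps are: the $\bar\delta$-frozen stacked update; exact annihilation of the rank-one drift via $H\mathbf 1_n=\mathbf 0$; vanishing cross terms by independence; the $(1-\tfrac1n)$ factor from $\mathrm{tr}(H)=n-1$; contraction at rate $(1-\eta\gamma)^2$ to the exact fixed point $\frac{\eta\bar\delta^2}{\gamma(2-\eta\gamma)}(1-\tfrac1n)Q$; and then dropping the $O(\eta\gamma)$ and $O(1/n)$ factors to reach the stated constant. Your explicit remark that what converges is $\mathbb E C^t$, with a concentration argument needed to pass to $C^t$ itself, makes explicit a point the paper leaves implicit.
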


\begin{proof}
We write $a = 1-\eta\gamma$ throughout and give the four steps in full.

\smallskip
\noindent\textbf{Step 1 (per-user update).}
Fix a user $u\in\Mk$ and a training step. The user-side BPR gradient for the triple $(u, i_u, j_u)$ is $-\delta_{i_uj_u}(\mathbf v_{i_u}-\bar{\mathbf v}_{-,u})$, where $\bar{\mathbf v}_{-,u}$ is the mean of the $\lneg$ sampled negatives. With $\ell_2$ strength $\gamma$ and the $\bar\delta$ approximation, Assumption~\ref{ass:bpr} gives
\[
\mathbf u_u^{t+1}=a\,\mathbf u_u^t+\eta\,\bar\delta\bigl(\mathbf v_{i_u}-\bar{\mathbf v}_{-,u}\bigr).
\]
Under Assumptions~\ref{ass:popularity} and \ref{ass:negsampling}, $\mathbb E[\mathbf v_{i_u}]=\boldsymbol\mu_+(\alpha)$ and $\mathbb E[\bar{\mathbf v}_{-,u}]=\boldsymbol\mu_-$, so the expected step is $\eta\bar\delta\bigl(\boldsymbol\mu_+(\alpha)-\boldsymbol\mu_-\bigr)=\eta\bar\delta\,\boldsymbol\Phi(\alpha)$, recovering the $\boldsymbol\Phi$ of Assumption~\ref{ass:nondegen}. Splitting the step into mean and fluctuation,
\[
\mathbf u_u^{t+1}=a\,\mathbf u_u^t+\eta\bar\delta\,\boldsymbol\Phi(\alpha)+\eta\,\boldsymbol\epsilon_u,
\qquad \mathbb E[\boldsymbol\epsilon_u]=\mathbf 0 .
\]
The fluctuation has covariance $\mathbb E[\boldsymbol\epsilon_u\boldsymbol\epsilon_u^\top]=\bar\delta^2 Q(\alpha,\lneg)$, where the positive draw contributes $S_+(\alpha)-\boldsymbol\mu_+\boldsymbol\mu_+^\top$ and the mean of $\lneg$ independent uniform negatives contributes $\tfrac1{\lneg}(S_--\boldsymbol\mu_-\boldsymbol\mu_-^\top)$. Stacking users into rows,
\[
U^{t+1}=aU^{t}+\eta\bar\delta\,\mathbf 1_n\boldsymbol\Phi(\alpha)^\top+\eta E,
\]
where row $u$ of $E$ is $\boldsymbol\epsilon_u$ and the rows are independent across users by Assumption~\ref{ass:negsampling}.

\smallskip
\noindent\textbf{Step 2 (centering kills the drift).}
The drift term is rank one with every row equal, so left-multiplying by $H$ annihilates it exactly:
\[
H\bigl(\mathbf 1_n\boldsymbol\Phi(\alpha)^\top\bigr)=(H\mathbf 1_n)\boldsymbol\Phi(\alpha)^\top=\mathbf 0 .
\]
This is an algebraic identity, not a small-term approximation, and it holds for every $\alpha$, every $\bar\delta$ and every $t$. Hence
\[
HU^{t+1}=aHU^{t}+\eta HE .
\]
Using $H^\top=H$ and $H^2=H$,
\[
C^{t+1}=\tfrac1n\bigl(HU^{t+1}\bigr)^{\!\top}\bigl(HU^{t+1}\bigr)
=a^2C^{t}+\tfrac{a\eta}{n}\bigl(U^{t\top}HE+E^\top HU^{t}\bigr)+\tfrac{\eta^2}{n}E^\top HE .
\]
The cross terms vanish in expectation because the step-$t$ sampling is independent of $U^t$ and $\mathbb E[E]=0$. For the last term, $E^\top HE=E^\top E-n\bar{\boldsymbol\epsilon}\bar{\boldsymbol\epsilon}^\top$ with $\bar{\boldsymbol\epsilon}=\tfrac1nE^\top\mathbf 1_n$, and since the rows are independent with covariance $\bar\delta^2Q$,
\[
\mathbb E\bigl[\tfrac1nE^\top HE\bigr]=\bar\delta^{2}Q-\tfrac1n\bar\delta^{2}Q=\bigl(1-\tfrac1n\bigr)\bar\delta^{2}Q .
\]
The exact recursion is therefore
\[
\mathbb E[C^{t+1}]=a^{2}C^{t}+\eta^{2}\bigl(1-\tfrac1n\bigr)\bar\delta^{2}Q,
\]
that is, $\mathbb E[\Delta C]=(-2\eta\gamma+\eta^{2}\gamma^{2})C^{t}+\eta^{2}(1-\tfrac1n)\bar\delta^{2}Q$. Dropping the $O(\eta^2\gamma^2)$ term and the $O(1/n)$ finite-sample correction, both negligible in our setting where $\eta\gamma\le10^{-4}$ and $n\gg d$, gives the first-order form
\[
\mathbb E[\Delta C]=-2\eta\gamma\,C^t+\eta^2\bar\delta^2\,Q(\alpha,\lneg).
\]
Only the noise term survives centering. The popularity direction has left the covariance dynamics entirely.

\smallskip
\noindent\textbf{Step 3 (where the drift went).}
Averaging the Step 1 update over users gives $\mathbb E[\bar{\mathbf u}^{t+1}]=a\,\mathbb E[\bar{\mathbf u}^{t}]+\eta\bar\delta\boldsymbol\Phi(\alpha)$, a linear recurrence that converges if and only if $|a|<1$, which is the condition $0<\eta\gamma<2$. Its fixed point is
\[
\bar{\mathbf u}^{\infty}=\frac{\eta\bar\delta}{1-a}\boldsymbol\Phi(\alpha)=\frac{\bar\delta}{\gamma}\boldsymbol\Phi(\alpha).
\]
The mean does move along the popularity direction, and it moves by an amount that grows as regularization weakens. This is the drift component characterized by \cite{liu2026rethinking}, and by Step 2 it doesn't enter the centered dynamics. An analysis of the uncentered $\Sigma=C+\bar{\mathbf u}\bar{\mathbf u}^\top$ sees $\|\bar{\mathbf u}^\infty\|^2=(\bar\delta/\gamma)^2\|\boldsymbol\Phi(\alpha)\|^2$ added on top, which is why an uncentered spectrum responds to popularity bias even when between-user spread does not.

\smallskip
\noindent\textbf{Step 4 (solving the matrix fixed point).}
Setting $\mathbb E[\Delta C]=0$ in the first-order form gives $2\eta\gamma\,C^{*}=\eta^{2}\bar\delta^{2}Q$, so
\[
C^{*}=\frac{\eta\bar\delta^{2}}{2\gamma}\,Q(\alpha,\lneg)=\frac{\bar\delta^{2}\tau}{\gamma^{2}}\,Q,\qquad \tau=\frac{\eta\gamma}{2}.
\]
This solves the matrix equation directly, so it holds independently of the coordinate system and requires no eigenbasis argument. Because $C^{*}$ is a positive scalar multiple of $Q$, the two are simultaneously diagonalizable, $\lambda_k^{C,*}=(\bar\delta^{2}\tau/\gamma^{2})\mu_k$, and
\[
V_C^{*}=\sum_{k=1}^{d}\log\lambda_k^{C,*}=d\log\!\bigl(\bar\delta^{2}\tau/\gamma^{2}\bigr)+\log\det Q .
\]
Since $Q\succ 0$ by Assumption~\ref{ass:nondegen} and $\lneg<\infty$ keeps the $\tfrac1{\lneg}$ term from vanishing, every $\mu_k>0$ and every $\lambda_k^{C,*}>0$.

Convergence follows from the recursion itself. The map $C\mapsto a^{2}C+\eta^{2}(1-\tfrac1n)\bar\delta^{2}Q$ of Step 2 is a contraction with rate $a^{2}<1$ whenever $0<\eta\gamma<2$, so the iteration reaches its unique fixed point from any positive definite start. That fixed point is
\[
\frac{\eta\bar\delta^{2}}{2\gamma}\cdot\frac{1-1/n}{1-\eta\gamma/2}\;Q,
\]
which is $C^{*}$ up to the same two factors dropped above. At our settings the correction is under $0.01$ nats of $V_C$, against effects of hundreds of nats, so we work with $C^{*}$ throughout.

Tracking eigenvalues in the frozen $C^{0}$ eigenbasis instead of solving the matrix equation gives $\lambda_k^{*}=(\eta\bar\delta^2/2\gamma)\,\mathbf q_k^\top Q\mathbf q_k$, and $\sum_k\log(\mathbf q_k^\top Q\mathbf q_k)\ge\log\det Q$ by Hadamard's inequality, with equality only when $Q$ is diagonal in that basis. The frozen-basis route therefore yields an upper bound on the steady-state volume rather than the value. The exact fixed point uses $Q$'s own spectrum.
\end{proof}

\begin{corollary}[Computable Phase Boundary]
\label{cor:phaseboundary}
Under the assumptions of Theorem~\ref{thm:contraction}, for $\Mk$, the contraction condition $V_C^{*}<V_C^{0}$ is equivalent to
\[
\log\det Q(\alpha,\lneg)
<\sum_{k=1}^d\log\tilde\lambda_k^{C,0},\qquad
\tilde\lambda_k^{C,0}=\frac{2\gamma\,\lambda_k^{C,0}}{\bar\delta^2\eta},
\]
and equivalently $\log\det Q<\log\det C^0+d\log(2\gamma/\bar\delta^2\eta)$. All quantities are computable from the trained embedding matrix restricted to $\Mk$, item interaction counts, and training hyperparameters.
\end{corollary}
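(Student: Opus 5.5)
The plan is to read Corollary~\ref{cor:phaseboundary} off the closed-form steady-state volume in Theorem~\ref{thm:contraction} and compare it with the initial volume written in the same spectral form. Throughout I work with $\varepsilon=0$, which the setup permits because $C^0\succ 0$ (assumed since $n\gg d$) and $Q\succ0$ (Assumption~\ref{ass:nondegen}). The ridge is then a harmless perturbation, and its effect should be noted separately rather than folded into the inequality. Write $\lambda_k^{C,0}$ for the eigenvalues of $C^0=\tfrac1n U^{0\top}HU^0$. Then
\[
V_C^0=\log\det C^0=\sum_{k=1}^d\log\lambda_k^{C,0}.
\]

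First, I take $V_C^{*}=d\log(\bar\delta^2\tau/\gamma^2)+\log\det Q$ from Step~4 of the proof of Theorem~\ref{thm:contraction}. I will simplify the prefactor using $\bar\delta^2\tau/\gamma^2=\eta\bar\delta^2/(2\gamma)$, which follows from $\tau=\eta\gamma/2$. The contraction condition $V_C^*<V_C^0$ then becomes
\[
\log\det Q<\log\det C^0-d\log\!\bigl(\eta\bar\delta^2/2\gamma\bigr)=\log\det C^0+d\log\!\bigl(2\gamma/\bar\delta^2\eta\bigr).
\]
This is the second displayed form of the boundary. To get the first form, I absorb the constant into each eigenvalue: $\sum_k\log\lambda_k^{C,0}+d\log c=\sum_k\log(c\,\lambda_k^{C,0})$ with $c=2\gamma/(\bar\delta^2\eta)$, which gives exactly $\sum_k\log\tilde\lambda_k^{C,0}$.

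Each manipulation is an equivalence. The logarithms are well defined because $\eta,\gamma,\bar\delta>0$ and all the eigenvalues involved are positive, and adding a constant to both sides preserves strict inequality. So the biconditional holds in both directions.

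The remaining claim is computability.
\begin{itemize}
\item $C^0$ comes from the trained $U$ restricted to the rows in $\Mk$. Membership in $\Mk$ is decided by computing $D_{KL}(p_u\|p_{\mathrm{global}}(\cdot;1))$ from the counts $c_{ui}$.
\item $Q(\alpha,\lneg)$ is assembled from the item embeddings $\mathbf v_i$, the weights $w_i(\alpha)=c_i^\alpha/\sum_k c_k^\alpha$, and $\lneg$, using the explicit formula in the theorem.
\item $\eta,\gamma$ are hyperparameters.
\item $\bar\delta$ is the empirical mean of $1-\sigma(\mathbf u^\top\mathbf v_i-\mathbf u^\top\mathbf v_j)$ over sampled triples.
\end{itemize}

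The main subtlety, rather than a genuine obstacle, is what "equivalent" refers to. The condition is equivalent to $V_C^*<V_C^0$ for the first-order fixed point $C^*$. It is not equivalent for the exact fixed point, which carries the factor $(1-1/n)/(1-\eta\gamma/2)$. I would state this explicitly, and note that the discrepancy shifts the threshold by $d\log\frac{1-1/n}{1-\eta\gamma/2}$, which the theorem's proof already bounds as negligible.

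A second point is that $\bar\delta$ must be the same constant used in the theorem, since the boundary is only as well defined as the $\bar\delta$ approximation. The corollary's logical content needs nothing more than the algebra above.
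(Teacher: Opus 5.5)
Your proposal is correct and follows the paper's proof: you substitute $V_C^{*}=d\log(\bar\delta^{2}\tau/\gamma^{2})+\log\det Q$ from Step 4 and $V_C^{0}=\sum_k\log\lambda_k^{C,0}$, then rearrange using $\bar\delta^{2}\tau/\gamma^{2}=\eta\bar\delta^{2}/(2\gamma)$, differing only in that you derive the $\log\det C^0$ form before the per-eigenvalue form. Your caveats are accurate refinements the paper leaves implicit: you work with $\varepsilon=0$, and you note that the equivalence holds for the first-order $C^{*}$ and not for the exact fixed point, which shifts the threshold by $d\log\frac{1-1/n}{1-\eta\gamma/2}$.
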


\begin{proof}
By Theorem~\ref{thm:contraction}, $V_C^{*}=d\log(\bar\delta^{2}\tau/\gamma^{2})+\log\det Q$ with $\tau=\eta\gamma/2$, and $V_C^{0}=\sum_k\log\lambda_k^{C,0}$. The condition $V_C^{*}<V_C^{0}$ rearranges to
\[
\log\det Q<\sum_{k=1}^{d}\log\lambda_k^{C,0}-d\log\!\bigl(\bar\delta^{2}\tau/\gamma^{2}\bigr)
=\sum_{k=1}^{d}\log\frac{\gamma^{2}\lambda_k^{C,0}}{\bar\delta^{2}\tau},
\]
and $\gamma^{2}/\bar\delta^{2}\tau=2\gamma/\bar\delta^{2}\eta$.
\end{proof}

\emph{Remarks.} (i)~$\mathrm{erank}(C^*)=\mathrm{erank}(Q)$. As $\alpha\to\infty$, $S_+$ concentrates on popular items, lowering $\mathrm{erank}(Q)$ and hence $\mathrm{erank}(C^*)$. Note that the anisotropy mechanism runs through $Q$, not through alignment with $\boldsymbol\Phi$. (ii)~$\log\det Q$ \emph{decreases} with $\lneg$. More negative sampling shrinks the $1/\lneg$ term, lowers the noise floor, and makes contraction \emph{easier}, whereas fewer negatives make it harder. (iii)~The boundary depends on $\gamma$ through $d\log(2\gamma/\bar\delta^2\eta)$, which moves by $d\log 10$ per decade of $\gamma$. At $d=64$ that is about $147$ nats per decade, and it is why the boundary in our experiments is governed by regularization strength rather than by $\alpha$.

\subsection{Evaluating the boundary on a trained model}
\label{sec:worked}

Corollary~\ref{cor:phaseboundary} is meant to be run, so we state the procedure. Given a trained checkpoint, the item interaction counts, and the training configuration:

\begin{enumerate}
\item Compute $D_{KL}(p_u\|p_{\mathrm{global}}(\cdot\,;1))$ for each user and keep those with $D_{KL}\le\kappa$. This gives $\Mk$ and is cached once.
\item Form $C^{0}=\tfrac1{|\Mk|}U_{\Mk}^\top HU_{\Mk}$ on the retained rows and take $\log\det(C^0+\varepsilon I)$. At $d=64$ this is a Cholesky factorization and runs in milliseconds.
\item Fit $\alpha$ from the interaction counts, form the weights $w_i(\alpha)=c_i^\alpha/\sum_k c_k^\alpha$, and build $Q(\alpha,\lneg)$ from the item embeddings by the expression in Theorem~\ref{thm:contraction}. Both $S_+$ and $S_-$ are $d\times d$ accumulations over the catalog.
\item Estimate $\bar\delta$ as the mean of $\delta_{ij}$ over a representative training batch at the same checkpoint.
\item Compare $\log\det Q$ against $\log\det C^0+d\log(2\gamma/\bar\delta^2\eta)$. The system is in the contraction regime when the left side is smaller, and the difference is the margin in nats.
\end{enumerate}

On MovieLens-25M with $d=64$, $\eta=0.001$, $\lneg=1$ and $\bar\delta=0.371$ measured at warmup, this gives $\log\det Q=-405.77$ against a right-hand side of $-30.91$ at $\gamma=0.1$ and $-472.89$ at $\gamma=10^{-4}$. The margins are $+374.87$ and $-67.12$. Both $C^{0}$ and $\bar\delta$ are read from the same checkpoint, so the boundary is a statement about the state it is evaluated at. Section~\ref{sec:magnitude} predicts the plateau instead, and uses the plateau's own $\bar\delta=0.186$. The whole computation is a handful of $d\times d$ operations and one pass over the catalog. It requires no simulation of the feedback loop, meaning a practitioner can see whether the mechanism is even live in their system before spending any time on measuring it.

\section{Empirical Findings}
\label{sec:empirical}

\textbf{Setup.} All experiments use MovieLens-25M with $d=64$, $n=162{,}541$ users, and $m=59{,}047$ items. The mainstream set $\Mk$ with $\kappa=3.0$ gives $63{,}854$ users, and the fitted popularity exponent is $\hat\alpha=1.11$ ($R^2=0.953$). We use $\lneg=1$, $\eta=0.001$, the Adam optimizer, and an $\varepsilon=10^{-6}$ ridge throughout.

\subsection{Phase boundary validation}
\label{sec:5.1}
With items frozen, Assumption~\ref{ass:stationary} is satisfied by construction. We measure $\log\det Q=-405.77$, and the corollary's right-hand side evaluates to $-30.91$ at $\gamma=0.1$ and $-472.89$ at $\gamma=10^{-4}$, predicting \textbf{contraction} at $\gamma=0.1$ (margin $+374.87$) and \textbf{expansion} at $\gamma=10^{-4}$ (margin $-67.12$). Frozen-item training shows that $V_C$ contracts at $\gamma=0.1$ ($-496.82\to-592.44$, $-19.25\%$) and expands at $\gamma=10^{-4}$ ($+28.55\%$, Figure~\ref{fig:boundary}). The two settings sit on opposite sides of the computed boundary and the observed sign matches the predicted sign in each.

\begin{figure}[t]
\centering
\includegraphics[width=0.82\linewidth]{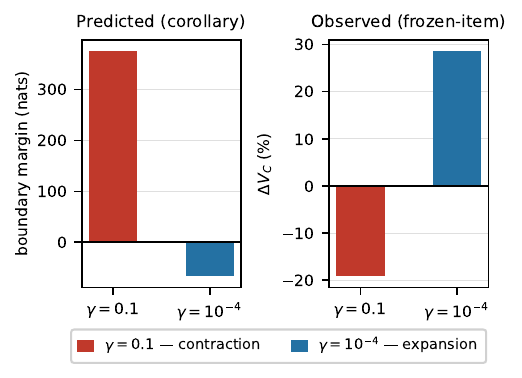}
\caption{Phase-boundary validation with frozen items: at $\gamma=0.1$ the corollary predicts contraction (margin $+374.87$) and $V_C$ contracts by $19.25\%$. At $\gamma=10^{-4}$ it predicts expansion (margin $-67.12$) and $V_C$ expands by $+28.55\%$. The predicted sign matches the observed sign on both sides of the boundary.}
\label{fig:boundary}
\end{figure}

\subsection{Structure and magnitude under Adam}
\label{sec:magnitude}
Regressing $\log\lambda_k(C^*)$ on $\log\mu_k(Q)$ at the $\gamma=0.1$ plateau yields slope $0.954$, $R^2=0.990$ (Figure~\ref{fig:slope}), robust to $\varepsilon$-tail removal ($0$ of $64$ eigenvalues in the $\varepsilon$ tail). This indicates the $Q^1$ law of Theorem~\ref{thm:contraction} persists under Adam (rather than a $Q^{1/2}$ law). The deviation is not specific to Adam. In a fully synthetic SGD environment satisfying every
assumption, the same regression yields slope $0.926$ with $R^{2}$ of
$0.998$ (Appendix~\ref{app:sim}), so the compression appears under the
theorem's own optimizer. We attribute it to the state dependence of
$\delta_{ij}$, which the $\bar\delta$ approximation discards. Learning happens the fastest along high-variance item directions, which suppresses
noise injection exactly where $\mu_k$ is largest.

\begin{figure}[t]
\centering
\includegraphics[width=0.85\linewidth]{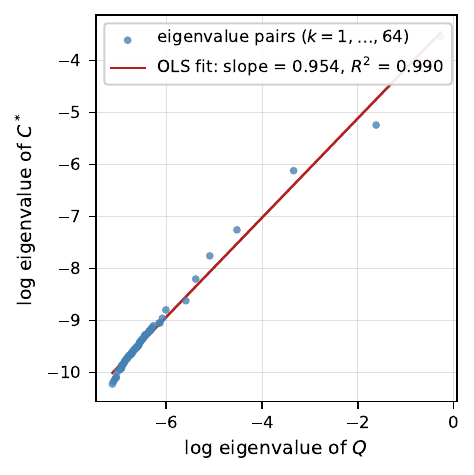}
\caption{Steady-state structure under Adam at the $\gamma=0.1$ plateau with frozen items ($\Mk$ users): $\log\lambda_k(C^*)$ against $\log\mu_k(Q)$ across all $d=64$ eigenpairs, with OLS fit (slope $0.954$, $R^2=0.990$), confirming $C^*\propto Q$. The intercept offset is the Adam scale shift discussed in Section~\ref{sec:magnitude}. This is the strong-regularization regime, not a deployable setting.}
\label{fig:slope}
\end{figure}

The magnitude, however, is off, and the regression characterizes the offset exactly. At the plateau $\bar\delta=0.186$, the SGD theory predicts intercept $\log(\eta\bar\delta^{2}/2\gamma)=-8.66$ with unit slope, giving a predicted plateau $V_C^{*}=d\,(-8.66)+\log\det Q\approx-960.2$ against the observed $-592.44$, a ${\approx}368$-nat gap ($\approx 5.75$ nats per dimension).
A constant per-coordinate intercept shift of $+5.46$ ($349.2$ nats total, measured $-3.21$ vs.\ theoretical $-8.66$) plus the mild spectral compression of the sub-unit slope ($(0.954-1)\log\det Q\approx+18.5$ nats) decompose the gap exactly, with negligible residual. We can rule out two explanations. Attributing the intercept shift to the $\bar\delta$ approximation would require $\bar\delta_{\mathrm{eff}}\approx 2.86$, impossible for a probability. Non-convergence is also excluded, with $V_C$ being stable within $0.17$ nats over the final $40$ epochs. In the same synthetic environment at $\gamma=0.1$, the SGD prediction lands within $1.7$ nats of
the observed plateau, about $0.11$ nats per dimension against $5.75$ on
real data under Adam (Appendix~\ref{app:sim}). We attribute the offset to Adam's adaptive-normalization scale shift. Direction and structure are predicted, and absolute magnitude requires an Adam-aware derivation (future work).

\subsection{Behavior at deployable $\gamma$}
\label{sec:5.3}
In the full feedback loop at $\gamma=10^{-4}$, $V_C$ declines $-1.75\%$ over $50$ rounds with signal-to-noise $\approx 230$ across seeds, while a null arm (random recommendations) \emph{expands} $+4.50\%$, matching the corollary's stationary-item expansion prediction at this $\gamma$ and establishing the contraction as policy-driven. However, no recommendation-level metric moves in lockstep with $V_C$. Inter-user list similarity falls from ${\sim}0.108$ to ${\sim}0.07$--$0.08$ by rounds 5--10 and flattens (the \emph{wrong} direction for homogenization), and per-user churn rises to ${\sim}0.995$ during rounds 0--10 while $V_C$ is still expanding (drift, not convergence). Additionally, coverage is flat ($0.0119\to{\sim}0.011$--$0.012$) and tail exposure floors at $0.0000$ by round 3, \emph{before} the $V_C$ arc begins, so the setup has limited power to detect tail-side consequences. Although the contraction is detectable, we don't observe corresponding changes in recommendation metrics.

\subsection{The $\alpha$-driven anisotropy mechanism is negligible at deployable $\gamma$}
\label{sec:5.4}
Sweeping $\alpha\in[0,2]$ at $\gamma=10^{-4}$, effective rank is flat ($10.89$ at $\alpha=0$ vs.\ $10.87$ at $\alpha=2$) with healthy accuracy (NDCG@10 ${\sim}0.030$) throughout, and the popularity-alignment share of variance slightly \emph{decreases} with $\alpha$. The $\alpha$-driven anisotropic collapse takes place only at model-breaking $\gamma=0.1$. At deployable $\gamma$ the phase boundary is dominated by the $\ell_2$/prefactor term. (An earlier draft's mechanism attributing anisotropy to alignment with $\boldsymbol\Phi(\alpha)$ rested on a variance-vs.-squared-mean error that centering removes, and these results are inconsistent with that interpretation.)

\subsection{Evaluation of a theory-derived restoration intervention}
\label{sec:5.5}
The theorem suggests a deployment-time restoration of collapsed directions and, at training time, a centered log-det regularizer with gradient $\tfrac2n HU(C+\varepsilon I)^{-1}$ that provably raises the steady-state $V_C$, but unlike post-hoc stabilization for basis consistency \cite{zielnicki2025orthogonal}, the goal here is re-inflating collapsed directions. Empirically, the restoration arm yields NDCG@10 $=0.050$ vs.\ baseline $0.056$, so it slightly hurts, with TailNDCG $=0.0000$ for both arms (measurement floor, disclosed). This is consistent with Section~\ref{sec:5.3}. If between-user contraction has no measurable recommendation-level effects at deployable settings, reversing it shouldn't improve quality, and empirically it doesn't.

\subsection{The boundary in a fully synthetic environment}
We also test the boundary where every assumption holds by construction, with frozen anisotropic items, plain SGD, and all users mainstream (Appendix~\ref{app:sim}). Above the predicted crossover $\gamma^{*} \approx 2.3\times10^{-3}$ the predicted sign matches at all four grid points, and at $\gamma=0.1$ the predicted plateau is
accurate to $1.7$ nats. Below the crossover the theory predicts expansion, but we observe a persistent residual contraction of $8.6$
to $14.6$ percent at one million steps, with $\bar\delta$ decaying $20$ to $23$ percent against about $3$ percent above the boundary. The state dependence of $\delta_{ij}$ therefore adds a contractive force in the fitting regime, so the expansion-side prediction is condition dependent, while the strong-collapse side is reliable in
both simulation and on real data.

\section{Limitations}
Strong contraction manifests only at regularization strengths ($\gamma=0.1$) that degrade the recommender. At deployable $\gamma$ it is small and was not reflected in any recommendation-level metric we measured (tail-exposure floor caveat noted). The $\alpha$-driven anisotropy mechanism does not operate at deployable $\gamma$. Under Adam the steady-state magnitude deviates by a constant factor, and in simulation the expansion-side prediction fails when the model fits, since the state dependence of $\delta_{ij}$ adds a contractive force the $\bar\delta$ approximation misses (Appendix~\ref{app:sim}). The theory-motivated intervention did not improve quality. The scope is mainstream users $\Mk$ and stationary items, validated by construction rather than in deployment.

\subsection{Single dataset}
Every number here comes from MovieLens-25M. The dataset has a public-benchmark popularity profile with fitted $\hat\alpha=1.11$ and $R^2=0.953$, which is a reasonable stand-in for a skewed catalog, but representativeness beyond it is an assumption and not a result. Two quantities in the boundary are dataset-dependent in ways we have not measured elsewhere. The first is $\log\det Q$, which depends on how the item embedding geometry lines up with the popularity weights, and a catalog with a different tail shape moves it. The second is the coverage of $\Mk$ at a given $\kappa$, which was $95.4\%$ of active users here and would be smaller on a platform with more niche consumption. Neither affects the derivation, and both affect where a given system sits relative to the boundary. Cross-dataset validation on the centered object is future work. We would rather say that plainly than generalize from one dataset.

\section{Conclusion}
We give a centered-covariance theorem with exact drift cancellation, and steady state $C^*\propto Q$, with a phase boundary computable from a trained model's embeddings, item counts, and hyperparameters. We also see where the mechanism applies and where it doesn't. The negative results are what make the positive result usable. In our experiments the boundary says deployable settings sit far from the collapse regime, which is worth knowing before spending anything on measuring it.

\bibliographystyle{ACM-Reference-Format}
\bibliography{references}

@inproceedings{loveland2025matrixrank,
  author    = {Loveland, Donald and Wu, Xinyi and Zhao, Tong and Koutra, Danai and Shah, Neil and Ju, Mingxuan},
  title     = {Understanding and Scaling Collaborative Filtering Optimization from the Perspective of Matrix Rank},
  booktitle = {Proceedings of the ACM Web Conference 2025 (WWW '25)},
  year      = {2025},
  pages     = {436--449},
  publisher = {ACM},
  address   = {Sydney, NSW, Australia},
  doi       = {10.1145/3696410.3714904},
  note      = {arXiv:2410.23300}
}

@article{peng2025directspec,
  author  = {Peng, Shaowen and Sugiyama, Kazunari and Liu, Xin and Mine, Tsunenori},
  title   = {Balancing Embedding Spectrum for Recommendation},
  journal = {ACM Transactions on Recommender Systems},
  year    = {2025},
  doi     = {10.1145/3718488},
  note    = {arXiv:2406.12032}
}

@inproceedings{zhang2023dimensionalcollapse,
  author    = {Zhang, Yifei and Zhu, Hao and Chen, Yankai and Song, Zixing and Koniusz, Piotr and King, Irwin},
  title     = {Mitigating the Popularity Bias of Graph Collaborative Filtering: A Dimensional Collapse Perspective},
  booktitle = {Advances in Neural Information Processing Systems 36 (NeurIPS 2023)},
  year      = {2023},
  pages     = {67533--67550}
}

@inproceedings{liu2026rethinking,
  author    = {Liu, Lingfeng and Song, Yixin and Shen, Dazhong and Yin, Bing and Li, Hao and Zhang, Yanyong and Wang, Chao},
  title     = {Rethinking Popularity Bias in Collaborative Filtering via Analytical Vector Decomposition},
  booktitle = {Proceedings of the 32nd ACM SIGKDD Conference on Knowledge Discovery and Data Mining V.1 (KDD '26)},
  year      = {2026},
  pages     = {879--890},
  publisher = {ACM},
  address   = {Jeju Island, Republic of Korea},
  doi       = {10.1145/3770854.3780295},
  note      = {arXiv:2512.10688}
}

@inproceedings{chaney2018confounding,
  author    = {Chaney, Allison J.B. and Stewart, Brandon M. and Engelhardt, Barbara E.},
  title     = {How Algorithmic Confounding in Recommendation Systems Increases Homogeneity and Decreases Utility},
  booktitle = {Proceedings of the 12th ACM Conference on Recommender Systems (RecSys '18)},
  year      = {2018},
  pages     = {224--232},
  publisher = {ACM},
  doi       = {10.1145/3240323.3240370},
  note      = {arXiv:1710.11214}
}

@inproceedings{mansoury2020feedback,
  author    = {Mansoury, Masoud and Abdollahpouri, Himan and Pechenizkiy, Mykola and Mobasher, Bamshad and Burke, Robin},
  title     = {Feedback Loop and Bias Amplification in Recommender Systems},
  booktitle = {Proceedings of the 29th ACM International Conference on Information \& Knowledge Management (CIKM '20)},
  year      = {2020},
  pages     = {2145--2148},
  publisher = {ACM},
  doi       = {10.1145/3340531.3412152},
  note      = {arXiv:2007.13019}
}

@inproceedings{jing2022dimensionalcollapse,
  author    = {Jing, Li and Vincent, Pascal and LeCun, Yann and Tian, Yuandong},
  title     = {Understanding Dimensional Collapse in Contrastive Self-Supervised Learning},
  booktitle = {International Conference on Learning Representations (ICLR)},
  year      = {2022},
  note      = {OpenReview YevsQ05DEN7; arXiv:2110.09348}
}

@inproceedings{chen2024ncl,
  author    = {Chen, Huiyuan and Lai, Vivian and Jin, Hongye and Jiang, Zhimeng and Das, Mahashweta and Hu, Xia},
  title     = {Towards Mitigating Dimensional Collapse of Representations in Collaborative Filtering},
  booktitle = {Proceedings of the 17th ACM International Conference on Web Search and Data Mining (WSDM '24)},
  year      = {2024},
  pages     = {106--115},
  publisher = {ACM},
  address   = {Merida, Mexico},
  doi       = {10.1145/3616855.3635832},
  note      = {arXiv:2312.17468}
}

@inproceedings{loveland2025weightdecay,
  author    = {Loveland, Donald and Ju, Mingxuan and Zhao, Tong and Shah, Neil and Koutra, Danai},
  title     = {On the Role of Weight Decay in Collaborative Filtering: A Popularity Perspective},
  booktitle = {Proceedings of the 31st ACM SIGKDD Conference on Knowledge Discovery and Data Mining V.2 (KDD '25)},
  year      = {2025},
  pages     = {1975--1986},
  publisher = {ACM},
  address   = {Toronto, ON, Canada},
  doi       = {10.1145/3711896.3737068},
  note      = {arXiv:2505.11318}
}

@inproceedings{zielnicki2025orthogonal,
  author    = {Zielnicki, Kevin and Hsiao, Ko-Jen},
  title     = {Orthogonal Low Rank Embedding Stabilization},
  booktitle = {Proceedings of the Nineteenth ACM Conference on Recommender Systems (RecSys '25)},
  year      = {2025},
  publisher = {ACM},
  doi       = {10.1145/3705328.3748141},
  note      = {arXiv:2508.07574}
}

@inproceedings{rendle2009bpr,
  author    = {Rendle, Steffen and Freudenthaler, Christoph and Gantner, Zeno and Schmidt-Thieme, Lars},
  title     = {{BPR}: Bayesian Personalized Ranking from Implicit Feedback},
  booktitle = {Proceedings of the Twenty-Fifth Conference on Uncertainty in Artificial Intelligence (UAI '09)},
  year      = {2009},
  pages     = {452--461},
  publisher = {AUAI Press},
  address   = {Montreal, QC, Canada},
  note      = {No DOI; extended version arXiv:1205.2618}
}

@article{bottou2018optimization,
  author  = {Bottou, L{\'e}on and Curtis, Frank E. and Nocedal, Jorge},
  title   = {Optimization Methods for Large-Scale Machine Learning},
  journal = {SIAM Review},
  volume  = {60},
  number  = {2},
  pages   = {223--311},
  year    = {2018},
  doi     = {10.1137/16M1080173}
}

\appendix
\section{Synthetic Validation of the Phase Boundary}
\label{app:sim}

\paragraph{Environment.} We build a setting where every assumption
of Theorem~\ref{thm:contraction} holds exactly. There are $n=2000$ users,
$m=500$ items, and $d=16$. Item embeddings are drawn once from a
zero-mean Gaussian with an anisotropic covariance whose eigenvalues
are log-spaced from $0.001$ to $0.1$, then frozen for the entire run
(Assumption~\ref{ass:stationary}). Interaction counts follow a power law and positives
are drawn from the popularity-weighted distribution with $\alpha=1.0$
for every user, so every user is mainstream. Negatives
are uniform with $\lneg=1$. The optimizer is plain
SGD with constant learning rate $\eta=0.001$ (Assumption~\ref{ass:bpr}). We
measure $V_C^{0}=-112.35$ from the initialization and
$\bar\delta = 0.5000$ at warmup, and we use $\varepsilon=10^{-9}$. The
same seed and initialization are shared across all arms, so only
$\gamma$ differs. The left-hand side of the corollary is
$\log\det Q = -65.75$, computed once from the known items and weights
and identical across arms. That invariance is also the check that caught
configuration errors during development, since any arm reporting a
different $\log\det Q$ has a bug rather than a finding.

\paragraph{Results.} Table~\ref{tab:sim} reports the grid. The
closed-form crossover is $\gamma^{*}=2.30\times10^{-3}$. At all four
grid points above $\gamma^{*}$ the predicted sign is correct. At
$\gamma=0.1$ the predicted plateau is
$V_C^{*} = d\log(\eta\bar\delta^{2}/2\gamma) + \log\det Q = -172.7$
against an observed plateau of $-174.4$, a gap of $1.7$ nats, about
$0.11$ nats per dimension. Regressing $\log\lambda_k(C^{*})$ on
$\log\mu_k(Q)$ at this plateau yields slope $0.926$ with
$R^{2}=0.998$, so the sub-unit spectral compression observed on real
data under Adam also appears under SGD.

\begin{table}[h]
\caption{Boundary predictions against observed $\Delta V_C$ in the
synthetic environment. LHS $= \log\det Q = -65.75$ throughout.}
\label{tab:sim}
\centering
\small
\begin{tabular}{lrrllr}
\toprule
$\gamma$ & RHS & margin & pred. & obs. $\Delta V_C$ & steps \\
\midrule
$10^{-4}$        & $-115.92$ & $-50.17$ & exp. & $-8.61\%$  & $10^{6}$ \\
$3\times10^{-4}$ & $-98.34$  & $-32.59$ & exp. & $-10.18\%$ & $10^{6}$ \\
$10^{-3}$        & $-79.08$  & $-13.33$ & exp. & $-14.62\%$ & $10^{6}$ \\
$3\times10^{-3}$ & $-61.50$  & $+4.25$  & con. & $-19.04\%$ & $4\times10^{5}$ \\
$10^{-2}$        & $-42.23$  & $+23.52$ & con. & $-29.88\%$ & $2.7\times10^{5}$ \\
$3\times10^{-2}$ & $-24.66$  & $+41.09$ & con. & $-41.15\%$ & $1.2\times10^{5}$ \\
$10^{-1}$        & $-5.39$   & $+60.36$ & con. & $-55.26\%$ & $4.3\times10^{4}$ \\
\bottomrule
\end{tabular}
\end{table}

\paragraph{Below the crossover.} The three arms below $\gamma^{*}$
predict expansion but contract persistently, still declining at one
million steps, which is two full relaxation times $1/(2\eta\gamma)$ at
$\gamma=10^{-3}$. The decayed $\bar\delta$ cannot account for this. At
$\gamma=10^{-4}$ the fixed point computed with the decayed
$\bar\delta=0.383$ sits at $-70.7$, far above the observed $-122.0$,
and the system moves away from it. The cause is the state dependence
of $\delta_{ij}$ itself, which is anticorrelated with $\mathbf{u}$.
Users who already fit an item direction receive smaller updates along
it, an extra contractive force the constant-$\bar\delta$
approximation leaves out. Consistent with this, $\bar\delta$ decays $20$
to $23$ percent in the three fitting arms against roughly $3$ percent
at $\gamma=0.1$, where weight decay keeps scores near zero and the
approximation is nearly exact.

\paragraph{What this scopes.} The two sides of the boundary are not
equally reliable, and the simulation is how we know. Above the
crossover the prediction holds in sign at every grid point and in
magnitude to $0.11$ nats per dimension, under the theorem's own
optimizer, with every assumption true by construction. Below it the
sign prediction fails in a specific and explainable way. Because the
failure appears here, in an environment with no confounds, it isn't an
artifact of real data, of Adam, or of the mainstream-user restriction.
It is a limitation of the constant-$\bar\delta$ approximation in the
regime where the model actually fits. A reader deciding whether to
trust the boundary on their own system should read it as a reliable
detector of the strong-collapse regime and not as a guarantee of
expansion below it.

\end{document}